\newtheorem{thm}{Theorem}
\newcommand{\tr}[1]{\text{Tr}\left(#1\right)}
\newcommand{\trr}[1]{\text{Tr}^2\left(#1\right)}
\newcommand{\trp}[2]{\text{Tr}_{#1}\left(#2\right)}
\newcommand{\CC}{{\mathbb C}}
\newcommand{\MM}{{\mathcal M}}
\newcommand{\KK}{{\mathcal K}}
\newcommand{\EE}[1]{{\text{\large$\mathbb E$}}\left(#1\right)}
\newcommand{\bra}[1]{\langle #1 |}
\newcommand{\ket}[1]{| #1 \rangle}
\newcommand{\bket}[1]{\langle #1 \rangle}
\begin{document}

\title{Fidelity is a sub-martingale for discrete-time quantum filters}

\author{
  Pierre Rouchon\thanks{Mines ParisTech: {\tt pierre.rouchon@mines-paristech.fr}}
       }
\date{March 2011}
\maketitle
\begin{abstract}
Fidelity is known to  increase through  any Kraus map: the fidelity between two density matrices  is less than the fidelity between their images via a Kraus map.   We prove here that, in average, fidelity is also increasing for  discrete-time quantum filters attached to an arbitrary  Kraus map:  fidelity between the density matrix of the underlying Markov chain and the density matrix of the  associated  quantum filter is a sub-martingale. This result is not restricted to pure states. It also holds true  for mixed states.
\end{abstract}

\section{Introduction}

Quantum filtering  was developed, in its modern theoretical form, by Belavkin during the 1980's~\cite{belavkin-92} (see also~\cite{bouten-et-al:siam2007} for a recent tutorial introduction).  Quantum filtering  proposes a theory  for statistical inference  strongly inspired from  quantum optical systems. These systems are  described by continuous-time quantum stochastic differential equations (stochastic master equations). The stability of the obtained quantum filters, i.e, the fact that quantum filters are asymptotically independent of their initial states, has been investigated in several papers. In~\cite{van-handel:proba2009}   sufficient convergence conditions are established: they are  related to  observability issues.\footnote{See~\cite{van-handel-mtns2010}  for  general connections between the stability of nonlinear filters and mathematical systems theory concepts such as stability, observability and detectability.} In~\cite{diosi-et-al:JPhA:2006} convergence of the  estimate  to the physical state is addressed for a generic Hamiltonian and measurement operator: the analysis   relies on the fact that,  for pure states, the fidelity between the real state and its estimate is proved  to be a sub-martingale.

Quantum filtering is not restricted to  continuous time. It can also be used for  discrete-time systems as in~\cite{dotsenko-et-al:PRA09} where a quantum filter is used inside a state-feedback scheme.  This note proves  that  fidelity between the estimate and real state is always a sub-martingale for such discrete-time quantum filters. This does not mean that quantum filters are always convergent. Nevertheless,   this sub-martingale combined with stochastic invariance  arguments could be useful  to prove convergence in specific situations.

Fidelity $F(\sigma,\rho)$ between two  quantum states described by density matrices $\rho$ and $\sigma$ coincides with  their Frobenius inner product $\tr{\rho\sigma}$ when at least one of two density matrices $\rho$ or $\sigma$  is a projector of rank one, i.e. a pure state.\footnote{There are two closely related definitions of the fidelity.  The "mathematical" definition used, e.g., in~\cite{nielsen-chang-book} corresponds to the square root of the "physical" definition we have adopted in this paper for simplicity sakes.} The fact that fidelity is a sub martingale for pure states is  a known fact.  For continuous-time systems, the convergence investigations  proposed in~\cite{diosi-et-al:JPhA:2006} relies on this fact. For discrete-time systems,
theorem~3 in~\cite{mirrahimi-et-al:cdc09}, based only on Cauchy-Schwartz inequalities, proves   that the Frobenius inner product between the estimate and real state is always a sub-martingale  whatever the purity of the estimate and real state are.  Consequently, if  we assume that the  real state remains pure, then, fidelity coincides with Frobenius inner product and thus is  known to be a sub-martingale. When  the real state and its estimate  are  mixed,  fidelity does not coincide  with Frobenius inner product and  the proof that it is a sub-martingale does not result directly from published papers, as far as we know. Fidelity between  the  arbitrary  states $\rho$ and $\sigma$ is  given  by $\trr{\sqrt{\sqrt{\rho}\sigma\sqrt{\rho}}}$, an expression  much more difficult to manipulate than $\tr{\rho\sigma}$. The  contribution of this note  (theorems~\ref{thm:Qfilter} and~\ref{thm:Qfilter:ext1} relying on   inequalities~\eqref{eq:ineq} and~\eqref{eq:ineq:ext1}, respectively)  consists in proving that, in the discrete-time case and for arbitrary purity,  the fidelity  between the estimate and real state is  sub-martingale.

In section~\ref{sec:kraus}, we recall the Kraus map associated to any quantum channel,  the associated Markov chain (quantum Monte-Carlo trajectories)  and highlight the  basic inequalities~\eqref{eq:ineq} and~\eqref{eq:ineq:ext1} underlying theorems~\ref{thm:Qfilter} and~\ref{thm:Qfilter:ext1} proved in the remaining sections.
In section~\ref{sec:standard},  we consider the quantum filter attached to a quantum trajectory and prove  theorem~\ref{thm:Qfilter}: fidelity is a sub-martingale. In section~\ref{sec:aggreg} we extend this  result to a family of Markov chains attached to the same  Kraus map. In section~\ref{sec:conclusion}, we conclude by noting that, contrarily to fidelity,
trace distance and  relative entropy are not always super-martingales.

The author  thanks  Mazyar Mirrahimi and Ramon Van Handel for fruitful comments and interesting references.

\section{Kraus maps and quantum Markov chains}\label{sec:kraus}
Take the Hilbert space $S=\CC^n$ of dimension $n>0$ and consider  a quantum channel described by the  Kraus map (see~\cite{nielsen-chang-book}, chapter~8  or~\cite{haroche-raimond:book06}, chapter 4)
\begin{equation}\label{eq:kraus}
\KK(\rho) = \sum_{\mu=1}^{m} M_\mu \rho M_\mu^\dag
\end{equation}
where
\begin{itemize}
\item $\rho$ is the density matrix describing the input  quantum state, $\KK(\rho)$ being then the output quantum state;  $\rho \in \CC^{n\times n}$ is  a density matrix, i.e., an Hermitian matrix semi-positive definite and of trace one;

\item for each $\mu\in\{1,\ldots,m\}$, $M_\mu \in\CC^{n\times n}/\{0\}$, and $\sum_\mu M_\mu^\dag M_\mu= I$.
\end{itemize}
To this quantum channel is associated the following  discrete-time  Markov chain (quantum Monte-Carlo trajectories, see, e.g., \cite{haroche-raimond:book06}):
\begin{equation}\label{eq:markov}
\rho_{k+1}= \MM_{\mu_k}(\rho_k)
\end{equation}
where
\begin{itemize}
  \item $\rho_k$ is the quantum state at sampling time $t_k$ and $k$ the sampling index ($t_k < t_{k+1}$).
  \item $\mu_k\in\{1,\ldots, m\}$ is a random variable; $\mu_k=\mu$
  with  probability $p_\mu(\rho_k)=\tr{M_\mu\rho_k M_\mu^\dag}$.

\item  $\MM_\mu(\rho) = \tfrac{1}{\tr{M_\mu\rho M_\mu^\dag}} M_\mu \rho M_\mu^\dag =
     \tfrac{1}{p_\mu(\rho)} M_\mu \rho M_\mu^\dag $
     .
\end{itemize}

Quantum probabilities derive from density matrices which are
dual (via the Frobenius Hermitian product $\tr{AB^\dag}$,  $A $, $B$,  $n\times n$ matrices with complex entries)  to the set of Hermitian $n\times n$ matrices (space of observables). The
Kraus map~\eqref{eq:kraus} leads to a generalized observable (or effect
valued measure) ${\mathcal P} \mapsto \sum_{\mu \in{\mathcal P}} M_\mu^\dag M_\mu$ where $\mathcal P$ is any subset of $\{1, \ldots,m\}$. The map $({\mathcal P},\rho) \mapsto \sum_{\mu \in{\mathcal P}} M_\mu \rho M_\mu^\dag$ is known as Davies instrument (see~chapter 4 of \cite{davies:book1976}). It formalizes  a measurement notion  relating the state (the density matrix  $\rho$),  measure properties  (here $\mathcal P$ living in the $\sigma$-algebra  generated by the singletons of the discrete set $\{1,\ldots,m\}$) and  an output state conditional on the observed values (here $\sum_{\mu \in{\mathcal P}} M_\mu\rho M_\mu^\dag$ up to a normalization to ensure a unit trace). The non-linear operators $\MM_\mu$ just define  the one-step transition mechanism for the above quantum Markov chain. Such setting can be seen as  an appropriate analogue of what one encounters in classical discrete-time filtering problems.

Fidelity is one of several ways to measure difference
between density matrices and is specific to quantum theory.
The Kraus map tends  to increase fidelity $F$ (see~\cite{nielsen-chang-book}, theorem 9.6, page 414): for all density matrices $\rho$ and $\sigma$, one  has
\begin{equation}\label{eq:fidel}
 \trr{\sqrt{\sqrt{\KK(\sigma)} \KK(\rho) \sqrt{\KK(\sigma)}}}=F(\KK(\sigma),\KK(\rho)) \geq F(\sigma,\rho)=\trr{\sqrt{\sqrt{\sigma} \rho\sqrt{\sigma}}}
\end{equation}
where, for any Hermitian  semi-positive matrix $A=U\Lambda U^\dag $, $U$ unitary matrix and $\Lambda=\text{diag}\{\lambda_l\}_{l\in\{1,\ldots,n\}}$, $\sqrt{A} = U \sqrt{\Lambda} U^\dag$ with
$\sqrt{\Lambda}=\text{diag}\{\sqrt{\lambda_l}\}_{l\in\{1,\ldots,n\}}$.

The conditional expectation of $\rho_{k+1}$ knowing $\rho_k$ is given by the Kraus map:
$$
\EE{\rho_{k+1} / \rho_k} = \KK(\rho_k).
$$
This result from the trivial identity
$
\sum_{\mu=1}^{m}
  \tr{M_\mu \rho M_\mu^\dag}
     \tfrac{M_\mu \rho M_\mu^\dag }{\tr{M_\mu \rho M_\mu^\dag}},
   = \KK(\rho)
   .
$
In section~\ref{sec:standard}, we  show during the proof of theorem~\eqref{thm:Qfilter}
the following inequality
\begin{align}
\sum_{\mu=1}^{m}
  \tr{M_\mu \rho M_\mu^\dag}
  F \left(
     \tfrac{M_\mu \sigma M_\mu^\dag }{\tr{M_\mu \sigma M_\mu^\dag}},
      \tfrac{M_\mu\rho M_\mu^\dag }{\tr{M_\mu\rho M_\mu^\dag}}
     \right)
     &\geq
      F( \sigma,\rho)
      \label{eq:ineq}
\end{align}
for any density matrices $\rho$ and $\sigma$.  The left-hand side is related to a conditional expectation. Inequality~\eqref{eq:ineq}, attached to the probabilistic mapping~\eqref{eq:markov}, can be seen as   the  stochastic counter-part of inequality~\eqref{eq:fidel} attached to  the deterministic mapping~\eqref{eq:kraus}. When for some $\mu$,  $\tr{M_\mu \sigma M_\mu^\dag}=0$ with $\tr{M_\mu \rho M_\mu^\dag}>0$, the $\mu$-term in the sum~\eqref{eq:ineq} is  not defined. This is not problematic, since in this case,  if we replace
$\tfrac{M_\mu \sigma M_\mu^\dag }{\tr{M_\mu \sigma M_\mu^\dag}}$ by $\tfrac{M_\mu \xi M_\mu^\dag }{\tr{M_\mu \xi M_\mu^\dag}}$ where $\xi$ is any density matrix such that ${\tr{M_\mu \xi M_\mu^\dag}} >0$,  this term  is then well defined (in a multi-valued way)  and inequality~\eqref{eq:ineq}  remains  satisfied for any such $\xi$.

During the proof of theorem~\eqref{eq:Qfilter:ext1}, we extend this inequality to any  partition of $\{1,\ldots,m\}$ into $p\geq 1$ sub-sets ${\mathcal P}_\nu$:
\begin{align}
\sum_{\nu=1}^{\nu=p}
  \tr{\sum_{\mu\in{\mathcal P}_{\nu}}M_\mu \rho M_\mu^\dag}
  F \left(
     \tfrac{\sum_{\mu\in{\mathcal P}_{\nu}}M_\mu \sigma M_\mu^\dag }{\tr{\sum_{\mu\in{\mathcal P}_{\nu}}M_\mu \sigma M_\mu^\dag}},
      \tfrac{\sum_{\mu\in{\mathcal P}_{\nu}}M_\mu\rho M_\mu^\dag }{\tr{\sum_{\mu\in{\mathcal P}_{\nu}}M_\mu\rho M_\mu^\dag}}
     \right)
     &\geq
      F( \sigma,\rho)
      \label{eq:ineq:ext1}
\end{align}

\section{The standard case.}\label{sec:standard}
 Take  a realization of the Markov chain associated to the Kraus map $\KK$.  Assume that we detect, for each $k$, the jump $\mu_k$  but   that we do not know the   initial state $\rho_0$. The objective is to propose at sampling $k$,  an estimation $\hat\rho_k$ of $\rho_k$ based on the past detections $\mu_0,\ldots,\mu_{k-1}$. The simplest method consists in starting from   an initial estimation $\hat \rho_0$  and at each sampling step to jump according to the detection. This leads to  the following estimation scheme  known as {\em a quantum filter} (see, e.g.,~\cite{wiseman-milburn:book}):
\begin{equation}\label{eq:Qfilter}
\hat \rho_{k+1}= \MM_{\mu_k}(\hat \rho_k)
\end{equation}
with   $p_\mu(\rho_k)=\tr{M_\mu\rho_k M_\mu}$  as probability of  $\mu_k=\mu$. Notice that when $\tr{M_{\mu_k}\hat \rho_k M_{\mu_k}}=0$, $\MM_{\mu_k}(\hat \rho_k)$ is not defined and should be replaced by $\MM_{\mu_k}(\xi)$ where $\xi$ is any density matrix such that $\tr{M_{\mu_k}\hat \xi M_{\mu_k}} >0$ (take, e.g.,  $\xi=\tfrac{1}{n} I_d$). The theorem here below could be useful to investigate  the  convergence of $\hat\rho_k$ towards $\rho_k$ as $k$ increases.
\begin{thm}\label{thm:Qfilter}
Consider the Markov chain of state $(\rho_k,\hat\rho_k)$ satisfying~\eqref{eq:markov} and~\eqref{eq:Qfilter}. Then $F(\hat\rho_k,\rho_k)$ is a sub-martingale:
$
\EE{F(\hat\rho_{k+1},\rho_{k+1}) / (\hat \rho_k,\rho_k) } \geq F(\hat\rho_{k},\rho_{k})
.
$
\end{thm}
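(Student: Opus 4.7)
The plan is first to unwind the conditional expectation. Since $\mu_k$ is drawn with probabilities $p_\mu(\rho_k)=\tr{M_\mu\rho_k M_\mu^\dag}$ and, conditional on $\mu_k=\mu$, the two chains evolve as $\rho_{k+1}=\MM_\mu(\rho_k)$ and $\hat\rho_{k+1}=\MM_\mu(\hat\rho_k)$, one has
\[
\EE{F(\hat\rho_{k+1},\rho_{k+1}) / (\hat\rho_k,\rho_k)}
= \sum_{\mu=1}^m \tr{M_\mu\rho_k M_\mu^\dag}\, F\!\left(\MM_\mu(\hat\rho_k),\MM_\mu(\rho_k)\right),
\]
which is exactly the left-hand side of~\eqref{eq:ineq} with $\sigma=\hat\rho_k$ and $\rho=\rho_k$. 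Thus the theorem reduces to proving~\eqref{eq:ineq}. The non-trivial feature is the asymmetry between the weights $p_\mu(\rho)$ driving the expectation and the normalisations $p_\mu(\sigma)$ appearing inside $\MM_\mu(\sigma)$.

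To prove~\eqref{eq:ineq}, I would purify. Introduce an ancilla $R\cong\CC^n$ and any purifications $\ket{\psi_\sigma},\ket{\psi_\rho}\in S\otimes R$ of $\sigma$ and $\rho$. A direct computation gives $\|(M_\mu\otimes I_R)\ket{\psi_\rho}\|^2 = p_\mu(\rho)$, and the normalised vector $(M_\mu\otimes I_R)\ket{\psi_\rho}/\sqrt{p_\mu(\rho)}$ is a purification of $\MM_\mu(\rho)$ (and similarly for $\sigma$). Uhlmann's theorem---that the fidelity equals the supremum of squared overlaps of purifications---then yields
\[
F\!\left(\MM_\mu(\sigma),\MM_\mu(\rho)\right)\;\geq\;\frac{|a_\mu|^2}{p_\mu(\sigma)\,p_\mu(\rho)},\qquad a_\mu:=\bra{\psi_\sigma}(M_\mu^\dag M_\mu\otimes I_R)\ket{\psi_\rho}.
\]
Multiplying by $p_\mu(\rho)$ and summing in $\mu$ bounds the LHS of~\eqref{eq:ineq} from below by $\sum_\mu |a_\mu|^2/p_\mu(\sigma)$.

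The last step is a weighted Cauchy-Schwarz inequality: since $\sum_\mu p_\mu(\sigma)=1$,
\[
\sum_\mu \frac{|a_\mu|^2}{p_\mu(\sigma)}\;\geq\;\left|\sum_\mu a_\mu\right|^2 = \left|\bket{\psi_\sigma|\psi_\rho}\right|^2,
\]
where the equality uses $\sum_\mu M_\mu^\dag M_\mu=I$. Since the LHS of~\eqref{eq:ineq} does not depend on the chosen purifications, taking the supremum over $\ket{\psi_\rho}$ with $\ket{\psi_\sigma}$ fixed and reinvoking Uhlmann on the right delivers the bound $F(\sigma,\rho)$, proving~\eqref{eq:ineq}.

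The only subtlety I expect is the degenerate case flagged after~\eqref{eq:ineq}: if $p_\mu(\sigma)=0$ while $p_\mu(\rho)>0$, the term $|a_\mu|^2/p_\mu(\sigma)$ is ill-defined. However, $\|(M_\mu\otimes I_R)\ket{\psi_\sigma}\|^2 = p_\mu(\sigma)=0$ forces $a_\mu=0$, so such indices can be dropped from the Cauchy-Schwarz step (the remaining $p_\mu(\sigma)$ still sum to one), which dovetails with the multi-valued convention explained in the excerpt.
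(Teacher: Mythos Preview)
Your argument is correct and follows the same three-beat skeleton as the paper---purify, apply Uhlmann's theorem in both directions, finish with Cauchy--Schwarz---but you execute it more economically. The paper introduces, in addition to the purifying ancilla $Q$, an environment register $E=\CC^m$ and a Stinespring unitary $V$ so that the purifications $P_\mu V(\ket{\psi}\otimes\ket{e_0})$ of the post-jump states carry an extra tensor factor $\ket{\mu}$; this makes them mutually orthogonal and turns the final step into a geometric projection estimate $|\bket{\hat\chi_k|\chi_k}|^2\le\|\hat P\ket{\chi_k}\|^2$. You bypass the environment entirely, work only in $S\otimes R$, and replace the projection argument by the scalar weighted Cauchy--Schwarz (Titu) inequality $\sum_\mu|a_\mu|^2/p_\mu(\sigma)\ge|\sum_\mu a_\mu|^2$. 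Unwinding the paper's inner products shows that its $|\bket{\hat\chi_{k\mu}|\chi_k}|^2$ is exactly your $|a_\mu|^2/p_\mu(\sigma)$, so the two proofs compute the same intermediate lower bound; yours simply reaches it with less scaffolding. Your handling of the degenerate case $p_\mu(\sigma)=0$ is also correct: $(M_\mu\otimes I_R)\ket{\psi_\sigma}=0$ forces $a_\mu=0$, the dropped LHS terms are nonnegative under any regularisation, and the remaining weights still sum to one. One cosmetic remark: rather than ``take the supremum over $\ket{\psi_\rho}$'' at the end, it is cleaner to invoke Uhlmann at the outset and fix purifications with $|\bket{\psi_\sigma|\psi_\rho}|^2=F(\sigma,\rho)$, exactly as the paper does.
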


The proof proposed here below deals with the general case when both $\rho_k$ and $\hat\rho_k$ can be mixed states.  It  relies on arguments similar to those used for the proof of  theorem 9.6 in~\cite{nielsen-chang-book}. They are  based on a kind of lifting process, usual in differential geometry. They consist  in introducing additional variables and embedding the Hilbert space $S$ into the tensor product with a copy $Q$ of $S$ and with the environment-model space $E$. In the "augmented  space" $S\otimes Q\otimes E$, the quantum dynamics~\eqref{eq:markov} read as an unitary transformation followed by an orthogonal projection. At sampling step $k$, the estimated and real states are  lifted to the "augmented space" via a standard procedure known as purification. These purifications are then subject to an  unitary transformation followed by  the same  projection.   Partial traces over $Q\otimes E$ provide then the  estimated and real states  at sampling time~$k+1$. At each sampling step $k$, the purifications of the estimated and real states are chosen according to Ulhmann's theorem: the square of modulus of  their Hermitian  product is maximum and coincides thus with the fidelity. The remaining part of the proof is based on simple manipulations of Hermitian products  and  Cauchy-Schwartz inequalities.

\begin{proof} The density matrices $\rho$ and $\hat\rho$ are operators from $S=\CC^{n}$ to $S$. Take a copy $Q=\CC^n$ of $S$ and consider  the composite system living on $S\otimes Q \equiv \CC^{n^2}$. Then $\hat\rho$ and $\rho$  correspond to partial traces versus $Q$ of projectors $\ket{\hat\psi}\bra{\hat\psi}$ and  $\ket{\psi}\bra{\psi}$ associated to pure states $\ket{\hat\psi}$ and $\ket{\psi} \in S\otimes Q$:
 $$
 \hat\rho =\trp{Q}{\ket{\hat\psi}\bra{\hat\psi}}, \quad  \rho =\trp{Q}{\ket{\psi}\bra{\psi}}
 $$
$\ket{\hat\psi}$ and $\ket{\psi}$ are called  purifications of $\hat\rho$ and $\rho$. They are not unique but one can always choose them such that $F(\hat\rho,\rho)= |\bket{\hat\psi| \psi}|^2$ (Uhlmann's theorem, see~\cite{nielsen-chang-book}, theorem 9.4 page 410).
Denote by  $\ket{\hat \psi_k}$ and $\ket{\psi_k}$   such purifications of $\hat\rho_k$ and $\rho_k$  satisfying
$$
F(\hat\rho_k,\rho_k)= |\bket{\hat\psi_k| \psi_k}|^2
.
$$

We have
$$
  \EE{F(\hat\rho_{k+1},\rho_{k+1}) / (\hat \rho_k,\rho_k) }
   = \sum_{\mu=1}^{m}
   p_\mu(\rho_k) F(\MM_\mu(\hat \rho_k),\MM_\mu(\rho_k))
   .
$$
The matrices $\MM_\mu(\hat \rho_k)$ and $\MM_\mu(\rho_k)$ are also density matrices.

 We will consider now the space $S\otimes Q \otimes E$ where $E=\CC^{m}$ is the Hilbert space of the environment appearing in the system-environment model of the Kraus map~\eqref{eq:kraus} (see~\cite{haroche-raimond:book06}, chapter 4 entitled "The environment is watching"). Thus  exist $\ket{e_0}$ a unitary  element of $E$,  an  ortho-normal basis  $(\ket{\mu})_{\mu\in\{1,\ldots,m\}}$ of $E$ and  a unitary transformation $U$ (not unique) on
$S\otimes E$ such that, for all $\ket{\phi}\in S$,
$$
U\left(\ket{\phi} \otimes \ket{e_0}\right) = \sum_{\mu=1}^{m}  \left(M_\mu \ket{\phi}\right) \otimes \ket{\mu}
.
$$
For each $\mu$, denote by $P_\mu$ the orthogonal projector onto  the subspace $S\otimes (\CC\ket{\mu})$. Then $P_\mu U \left(\ket{\phi} \otimes \ket{e_0}\right) = \left(M_\mu \ket{\phi}\right) \otimes \ket{\mu}$ and $\sum_\mu P_\mu = I$. For any density matrix $\rho$ on $S$,
$$
P_\mu U \left(\rho \otimes \ket{e_0}\bra{e_0} \right)U^\dag P_\mu
= M_\mu \rho M_\mu^\dag \otimes \ket{\mu}\bra{\mu}
$$
and thus
$$
\trp{E}{P_\mu U \left(\rho \otimes \ket{e_0}\bra{e_0} \right)U^\dag P_\mu}
= M_\mu \rho M_\mu^\dag
.
$$

 The   unitary transformation $U$ of $S\otimes E$ can be  extended  to $ S\otimes Q \otimes E \equiv S\otimes E \otimes Q$ by setting  $V=U\otimes I$ ($I$ is identity on $Q$). Since $\ket{\psi_k}\in S\otimes Q$ is a purification of $\rho_k$, we have
$$
M_\mu \rho_k M_\mu^\dag = \trp{E\otimes Q}{ P_\mu V \left(\ket{\psi_k}\bra{\psi_k}\otimes \ket{e_0}\bra{e_0}\right) V^\dag P_\mu}
.
$$
Set $\ket{\phi_k}=\ket{\psi_k}\otimes\ket{e_0} \in S\otimes Q\otimes E$ and $\ket{\chi_k}= V \ket{\phi_k}$. Using $P_\mu^2=P_\mu$,  we have
$$
p_\mu(\rho_k)
=\tr{M_\mu \rho_k M_\mu^\dag }= \bket{\phi_k|V^\dag P_\mu V|\phi_k} = \|P_\mu \ket{\chi_k}\|^2.
$$
Thus,  for each $\mu$, the  state $
\ket{\chi_{k\mu}}=\tfrac{1}{\sqrt{p_\mu(\rho_k)}}P_\mu  \ket{\chi_k}$
is a purification of $\MM_\mu(\rho_k)$:
$$
\MM_\mu(\rho_k)= \trp{Q\otimes E}{\ket{\chi_{k\mu}}\bra{\chi_{k\mu}}}.
$$
Similarly set  $\ket{\hat \phi_k}=\ket{\hat\psi_k}\otimes\ket{e_0}$ and $\ket{\hat \chi_k}= V \ket{\hat \phi_k}$.  For each $\mu$,
$
\ket{\hat \chi_{k\mu}}=\tfrac{1}{\sqrt{p_\mu(\hat\rho_k)}} P_\mu \ket{\hat\chi_{k}}
$
is  also a purification of $\MM_\mu(\hat\rho_k)$. By Uhlmann's theorem,
$$
F(\MM_\mu(\hat \rho_k),\MM_\mu(\rho_k)) \geq |\bket{\hat\chi_{k\mu}| \chi_{k\mu}} |^2
.
$$
Thus we have
$$
\EE{F(\hat\rho_{k+1},\rho_{k+1}) / (\hat \rho_k,\rho_k) }
   \geq  \sum_{\mu=1}^{m}  p_\mu(\rho_k)~ |\bket{\hat\chi_{k\mu}| \chi_{k\mu}} |^2
   .
$$
Since $V$ is unitary,
$$
|\bket{\hat \chi_k|\chi_k}|^2 = |\bket{\hat \phi_k|\phi_k}|^2= |\bket{\hat \psi_k|\psi_k}|^2= F(\hat\rho_k,\rho_k)
.
$$
Let us show  that
$
\sum_{\mu=1}^{m}  p_\mu(\rho_k)~ |\bket{\hat\chi_{k\mu}| \chi_{k\mu}} |^2 \geq  |\bket{\hat \chi_k|\chi_k}|^2
.
$
We have $$ p_\mu(\rho_k)~ |\bket{\hat\chi_{k\mu}| \chi_{k\mu}}|^2= |\bket{\hat\chi_{k\mu}| P_\mu \chi_k}|^2=
|\bket{\hat\chi_{k\mu}|\chi_k}|^2,$$ thus it is enough  to prove that
$
\sum_{\mu=1}^{m} |\bket{\hat\chi_{k\mu}| \chi_k}|^2  \geq  |\bket{\hat \chi_k|\chi_k}|^2
$.
Denote by  $\hat R\subset S\otimes Q\otimes E$ the vector space spanned by the ortho-normal basis $\left(\ket{\hat\chi_{k\mu}}\right)_{\mu\in\{1,\ldots,m\}}$ and  by $\hat P$ the projector on $\hat R$.  Since
$$\ket{\hat\chi_k} = \sum_{\mu=1}^{m} P_\mu \ket{\hat\chi_k} = \sum_{\mu=1}^{m}  \sqrt{p_\mu(\hat \rho_k)} \ket{\hat\chi_{k\mu}}$$
$\ket{\hat\chi_k}$ belongs to $\hat R$ and thus $|\bket{\hat\chi_k| \chi_k}|^2=|\bket{\hat \chi_k | \hat P\ket{\chi_k}}|^2$.
We conclude by  Cauchy-Schwartz inequality
$$
|\bket{\hat\chi_k| \chi_k}|^2=|\bket{\hat \chi_k | \hat P\ket{\chi_k}}|^2 \leq \|\hat \chi_k \|^2 \|\hat P\ket{\chi_k}\|^2= \|\hat P\ket{\chi_k}\|^2=\sum_{\mu=1}^{m} |\bket{\hat\chi_{k\mu}|\chi_k}|^2.
$$

\end{proof}

\section{The coarse-grained  case.}\label{sec:aggreg}
Let us consider another    Markov chain  associated to a partition of $\{1,\ldots,m\}$ into $p\geq 1$ sub-sets ${\mathcal P}_\nu$ (aggregation  of several quantum jumps via "partial Kraus maps"):
\begin{equation}\label{eq:markov:ext1}
\rho_{k+1} = \tfrac{1}{\tr{\sum_{\mu\in{\mathcal P}_{\nu_k}} M_\mu \rho_k M_\mu^\dag }}\left( \sum_{\mu\in{\mathcal P}_{\nu_k}} M_\mu \rho_k M_\mu^\dag \right)
\end{equation}
where $\nu_k=\nu$ with probability $\tr{\sum_{\mu\in{\mathcal P}_\nu} M_\mu \rho_k M_\mu^\dag }$. It is clear that
$\EE{\rho_{k+1}/\rho_k} = \KK(\rho_k)$: the Markov chains~\eqref{eq:markov} and~\eqref{eq:markov:ext1} are  attached to the same Kraus map~\eqref{eq:kraus}.
Consider the associated quantum filter
\begin{equation}\label{eq:Qfilter:ext1}
\hat \rho_{k+1} = \tfrac{1}{\tr{\sum_{\mu\in{\mathcal P}_{\nu_k}} M_\mu \hat \rho_k M_\mu^\dag }}\left( \sum_{\mu\in{\mathcal P}_{\nu_k}} M_\mu \hat \rho_k M_\mu^\dag \right)
\end{equation}
where the jump index $\nu_k$ coincides with the jump index $\nu_k$ in~\eqref{eq:markov:ext1}.
Then we have the following theorem.
\begin{thm}\label{thm:Qfilter:ext1}
Consider the Markov chain of state $(\rho_k,\hat\rho_k)$ satisfying~\eqref{eq:markov:ext1} and~\eqref{eq:Qfilter:ext1}. Then $F(\hat\rho_k,\rho_k)$ is a sub-martingale:
$
\EE{F(\hat\rho_{k+1},\rho_{k+1}) / (\hat \rho_k,\rho_k) } \geq F(\hat\rho_{k},\rho_{k})
.
$
\end{thm}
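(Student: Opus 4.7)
The plan is to re-run the proof of Theorem~\ref{thm:Qfilter} essentially verbatim, the only change being that the rank-one environment projectors $P_\mu$ are replaced by block projectors indexed by the partition. In particular, no new analytic ingredient is needed: inequality~\eqref{eq:ineq:ext1} will follow from Uhlmann's theorem plus the same Cauchy--Schwartz estimate as before.

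More precisely, I keep the Stinespring dilation $U$ on $S\otimes E$ with $E=\CC^{m}$, the orthonormal basis $(\ket{\mu})_{\mu}$ and the reference vector $\ket{e_0}$, together with its lift $V=U\otimes I$ to $S\otimes Q\otimes E$. For each block $\nu\in\{1,\ldots,p\}$ I introduce the orthogonal projector $P_\nu = \sum_{\mu\in{\mathcal P}_\nu} P_\mu$ onto $S\otimes \mathrm{span}\{\ket{\mu}:\mu\in{\mathcal P}_\nu\}$. Since the ${\mathcal P}_\nu$ partition $\{1,\ldots,m\}$, the identities $P_\nu P_{\nu'}=\delta_{\nu\nu'}P_\nu$ and $\sum_\nu P_\nu=I$ still hold; these are exactly the algebraic properties of the $P_\mu$ that were used previously. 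A routine computation then gives
$$
\trp{E}{P_\nu U(\rho\otimes\ket{e_0}\bra{e_0})U^\dag P_\nu}\;=\;\sum_{\mu\in{\mathcal P}_\nu} M_\mu\rho M_\mu^\dag,
$$
so that the coarse-grained post-jump states $\rho_{k+1}$ and $\hat\rho_{k+1}$ conditional on $\nu_k=\nu$ are obtained from $\rho_k$ and $\hat\rho_k$ by the same project-and-normalise mechanism as before, with $P_\nu$ playing the role of $P_\mu$ and conditional probability $q_\nu(\rho_k)=\tr{\sum_{\mu\in{\mathcal P}_\nu} M_\mu\rho_k M_\mu^\dag}=\|P_\nu\ket{\chi_k}\|^2$.

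From here the rest of the argument goes through unchanged. I start from purifications $\ket{\psi_k},\ket{\hat\psi_k}\in S\otimes Q$ realising $F(\hat\rho_k,\rho_k)=|\bket{\hat\psi_k|\psi_k}|^2$ via Uhlmann's theorem, then set $\ket{\chi_k}=V(\ket{\psi_k}\otimes\ket{e_0})$, $\ket{\hat\chi_k}=V(\ket{\hat\psi_k}\otimes\ket{e_0})$, and form the block-conditional purifications $\ket{\chi_{k\nu}}=P_\nu\ket{\chi_k}/\sqrt{q_\nu(\rho_k)}$ and $\ket{\hat\chi_{k\nu}}=P_\nu\ket{\hat\chi_k}/\sqrt{q_\nu(\hat\rho_k)}$ of the coarse-grained post-jump states. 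Uhlmann's theorem again lower-bounds each block fidelity by $|\bket{\hat\chi_{k\nu}|\chi_{k\nu}}|^2$, while unitarity of $V$ preserves the original fidelity, so $|\bket{\hat\chi_k|\chi_k}|^2=F(\hat\rho_k,\rho_k)$. Using the identity $q_\nu(\rho_k)|\bket{\hat\chi_{k\nu}|\chi_{k\nu}}|^2=|\bket{\hat\chi_{k\nu}|\chi_k}|^2$ and the expansion $\ket{\hat\chi_k}=\sum_\nu\sqrt{q_\nu(\hat\rho_k)}\,\ket{\hat\chi_{k\nu}}$, the same Cauchy--Schwartz step then yields
$$
|\bket{\hat\chi_k|\chi_k}|^2 \;=\; |\bra{\hat\chi_k}\hat P\ket{\chi_k}|^2 \;\leq\; \|\hat P\ket{\chi_k}\|^2 \;=\; \sum_{\nu=1}^{p}|\bket{\hat\chi_{k\nu}|\chi_k}|^2,
$$
where $\hat P$ denotes the orthogonal projector onto the span of the family $(\ket{\hat\chi_{k\nu}})_\nu$. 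Recombining the pieces produces inequality~\eqref{eq:ineq:ext1}, from which the sub-martingale property follows.

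The only place where the partition hypothesis is genuinely used is the orthonormality of $(\ket{\hat\chi_{k\nu}})_\nu$ needed at this final Cauchy--Schwartz step, which relies on $P_\nu P_{\nu'}=\delta_{\nu\nu'}P_\nu$, i.e.\ on the disjointness of the blocks; this is precisely what would fail for overlapping coarse-grainings. I therefore expect no real obstacle beyond the bookkeeping of verifying that block projectors on $E$ enjoy the same algebraic properties as rank-one environment-basis projectors, which they do exactly because $\{{\mathcal P}_\nu\}$ is a partition.
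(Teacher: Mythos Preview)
Your proposal is correct and matches the paper's own proof essentially line for line: replace the rank-one projectors $P_\mu$ by the block projectors $P_\nu=\sum_{\mu\in{\mathcal P}_\nu}P_\mu$, rerun the purification/Uhlmann/Cauchy--Schwartz argument of Theorem~\ref{thm:Qfilter}, and use $P_\nu P_{\nu'}=\delta_{\nu\nu'}P_\nu$, $\sum_\nu P_\nu=I$ exactly where $P_\mu P_{\mu'}=\delta_{\mu\mu'}P_\mu$, $\sum_\mu P_\mu=I$ were used before. The paper writes $\tilde P_\nu$, $\tilde p_\nu$, $\tilde{\MM}_\nu$ where you write $P_\nu$, $q_\nu$, etc., but the content is identical.
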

\begin{proof} It is similar to the proof of theorem~\ref{thm:Qfilter}. We will just point out here the main changes using the same notations.
We start from
$$
  \EE{F(\hat\rho_{k+1},\rho_{k+1}) / (\hat \rho_k,\rho_k) }
   = \sum_{\nu=1}^{p}
   \tilde p_\nu(\rho_k) F(\tilde{\MM}_\nu(\hat \rho_k),\tilde{\MM}_\nu(\rho_k))
   .
$$
where  we have set
$$
\tilde p_\nu (\rho)= \tr{\sum_{\mu\in{\mathcal P}_{\nu}} M_\mu  \rho M_\mu^\dag },
\quad
\tilde{\MM}_\nu(\rho)= \tfrac{1}{\tilde p_\nu(\rho)} \left( \sum_{\mu\in{\mathcal P}_{\nu}} M_\mu  \rho M_\mu^\dag \right).
$$
With $\tilde P_\nu$ the orthogonal projector on  $S\otimes Q \otimes \text{span}\{\ket{\mu}, \mu\in {\mathcal P}_{\nu} \}$ and
$\tilde{M}_\nu(\rho)=  \sum_{\mu\in{\mathcal P}_{\nu}} M_\mu  \rho M_\mu^\dag$,
we have
$$
\sum_{\mu\in{\mathcal P}_{\nu}} M_\mu \hat \rho_k M_\mu^\dag
= \trp{Q\otimes E}{ \tilde{P}_\nu V \left(\ket{\psi_k}\bra{\psi_k}\otimes \ket{e_0}\bra{e_0}\right) V^\dag \tilde P_\nu}
$$
and
$$
\tilde p_\nu(\rho_k)
=\tr{\sum_{\mu\in{\mathcal P}_{\nu}} M_\mu \hat \rho_k M_\mu^\dag }= \bket{\phi_k|V^\dag \tilde P_\nu V|\phi_k} = \|\tilde P_\nu \ket{\chi_k}\|^2
$$ For each $\nu$, the  state $
\ket{\tilde\chi_{k\nu}}=\tfrac{1}{\sqrt{\tilde p_\nu(\rho_k)}}\tilde P_\nu  \ket{\chi_k}$
is a purification of $\tilde{\MM}_\nu(\rho_k)$:
$$
\tilde{\MM}_\nu(\rho_k)= \trp{Q\otimes E}{\ket{\tilde\chi_{k\nu}}\bra{\tilde \chi_{k\nu}}}.
$$
Similarly
$
\ket{\hat{\tilde{\chi}}_{k\nu}}=\tfrac{1}{\sqrt{\tilde p_\nu(\hat\rho_k)}} \tilde P_\nu \ket{\hat\chi_{k}}
$
is  also a purification of $\tilde{\MM}_\nu(\hat\rho_k)$. By Uhlmann's theorem,
$$
F(\tilde{\MM}_\nu(\hat \rho_k),\tilde{\MM}_\nu(\rho_k)) \geq |\bket{\hat{\tilde{\chi}}_{k\nu}| \tilde\chi_{k\nu}} |^2
.
$$
Thus we have
$$
\EE{F(\hat\rho_{k+1},\rho_{k+1}) / (\hat \rho_k,\rho_k) }
   \geq  \sum_{\nu=1}^{p}  \tilde p_\nu(\rho_k)~ |\bket{\hat{\tilde{\chi}}_{k\mu}| \tilde\chi_{k\mu}} |^2
   .
$$
Let us show  that
$
\sum_{\nu=1}^{p}  \tilde p_\nu(\rho_k)~ |\bket{\hat{\tilde{\chi}}_{k\nu}| \tilde\chi_{k\nu}} |^2 \geq  |\bket{\hat \chi_k|\chi_k}|^2=F(\hat\rho_k,\rho_k)
.
$
We have $$ \tilde p_\nu(\rho_k)~ |\bket{\hat{\tilde{\chi}}_{k\nu}| \tilde\chi_{k\nu}}|^2= |\bket{\hat{\tilde{\chi}}_{k\nu}| \tilde P_\nu \chi_k}|^2=
|\bket{\hat{\tilde{\chi}}_{k\nu}|\chi_k}|^2,$$ thus it is enough  to prove that
$
\sum_{\nu=1}^{p} |\bket{\hat{\tilde{\chi}}_{k\nu}| \chi_k}|^2  \geq  |\bket{\hat \chi_k|\chi_k}|^2
$.
Denote by  $\hat{\tilde{R}}\subset S\otimes Q\otimes E$ the vector space spanned by the ortho-normal basis $\left(\ket{\hat{\tilde{\chi}}_{k\nu}}\right)_{\nu\in\{1,\ldots,p\}}$ and  by $\hat{\tilde{P}}$ the projector on $\hat{\tilde{R}}$.  Since
$$\ket{\hat\chi_k} = \sum_{\nu=1}^{p} \tilde P_\nu \ket{\hat\chi_k} = \sum_{\nu=1}^{p}  \sqrt{\tilde p_\nu(\hat \rho_k)} \ket{\hat{\tilde{\chi}}_{k\nu}}$$
$\ket{\hat\chi_k}$ belongs to $\hat{\tilde{R}}$ and thus $|\bket{\hat\chi_k| \chi_k}|^2=|\bket{\hat \chi_k | \hat{\tilde{P}}\ket{\chi_k}}|^2$.
We conclude by  Cauchy-Schwartz inequality
$$
|\bket{\hat\chi_k| \chi_k}|^2=|\bket{\hat \chi_k | \hat{\tilde{P}}\ket{\chi_k}}|^2 \leq \|\hat \chi_k \|^2 \|\hat{\tilde{P}}\ket{\chi_k}\|^2= \|\hat{\tilde{P}}\ket{\chi_k}\|^2=\sum_{\nu=1}^{p} |\bket{\hat{\tilde{\chi}}_{k\nu}|\chi_k}|^2.
$$

\end{proof}

\section{Concluding remarks}\label{sec:conclusion}

Theorems~\ref{thm:Qfilter} and~\ref{thm:Qfilter:ext1} are   still valid if the Kraus operators $M_\mu$ depend on $k$. In particular, $F(\hat \rho_k,\rho_k)$ remains a sub-martingale even if the Kraus operators depend on $\hat\rho_k$, i.e.,  in case  of feedback.
 A natural extension will be to prove that  fidelity yields  also to a sub-martingale in the continuous-time case.  The Markov chain~\eqref{eq:markov} is then replaced by a stochastic differential equation for $\rho$ driven by  a Poisson or a Wiener process.

Kraus maps are contractions for the trace distance (see~\cite{nielsen-chang-book}, theorem 9.2, page 406):  for all density  matrices $\sigma$, $\rho$, one  has $ \tr{\left|\KK(\sigma)-\KK(\rho) \right|} \leq \tr{|\sigma-\rho|}$.
When $\sigma$ and $\rho$ are pure states (projectors of rank one),  $D(\sigma,\rho)=\sqrt{1-F(\sigma,\rho)}$. Consequently inequality~\eqref{eq:ineq}  yields to
$$
\sum_{\mu=1}^{m}
  \tr{M_\mu \rho M_\mu^\dag}
  D \left(
     \tfrac{M_\mu \sigma M_\mu^\dag }{\tr{M_\mu \sigma M_\mu^\dag}},
      \tfrac{M_\mu\rho M_\mu^\dag }{\tr{M_\mu\rho M_\mu^\dag}}
     \right)
     \leq
      D( \sigma,\rho)
$$
for any pure states  $\sigma$ and $\rho$ (use the fact that $[0,1]\ni x \mapsto \sqrt{1-x}$ is decreasing and concave). It is then tempting to  conjecture that the above  inequality  holds  true for any mixed states $\rho$ and $\sigma$, i.e.,  that
$D(\hat\rho_k,\rho_k)=\tr{\left|\hat\rho_k-\rho_k\right|}$  is  a super-martingale of the Markov chain defined by~\eqref{eq:markov} and~\eqref{eq:Qfilter}. Unfortunately, this is not true in general as shown by the following counter-example.  Take  $n=3$, $m=2$ and
$$
\rho=\begin{pmatrix}
       \tfrac{1}{2} &  0& 0 \\
        0 &  \tfrac{1}{2} & 0 \\
        0  & 0 & 0 \\
     \end{pmatrix}, \quad
     \sigma =
     \begin{pmatrix}
       0 &  0& 0 \\
        0 &  \tfrac{1}{2} & 0 \\
        0  & 0 & \tfrac{1}{2} \\
     \end{pmatrix}, \quad
    M_1 =
     \begin{pmatrix}
       1 &  0& 0 \\
        0 &  \tfrac{1}{\sqrt{2}} & 0 \\
        0  & 0 & 0 \\
     \end{pmatrix}, \quad
    M_2 =
     \begin{pmatrix}
       0 &  0& 0 \\
        0 &  \tfrac{1}{\sqrt{2}} & 0 \\
        0  & 0 & 1 \\
     \end{pmatrix}.
$$
Then  $\sum_{\mu=1}^{2}
  \tr{M_\mu \rho M_\mu^\dag}
  D \left(
     \tfrac{M_\mu \sigma M_\mu^\dag }{\tr{M_\mu \sigma M_\mu^\dag}},
      \tfrac{M_\mu\rho M_\mu^\dag }{\tr{M_\mu\rho M_\mu^\dag}}
     \right)= \tfrac{4}{3} >  1 = D(\sigma,\rho)$.\footnote{
     Notice that we have
     $\sum_{\mu=1}^{2}
  \tr{M_\mu \rho M_\mu^\dag}
  F \left(
     \tfrac{M_\mu \sigma M_\mu^\dag }{\tr{M_\mu \sigma M_\mu^\dag}},
      \tfrac{M_\mu\rho M_\mu^\dag }{\tr{M_\mu\rho M_\mu^\dag}}
     \right)= \tfrac{1}{4}+\tfrac{1}{12}>  \tfrac{1}{4} = F(\sigma,\rho) $. }
     The same counter example shows that the relative entropy
     $\tr{\rho_k \log\rho_k} - \tr{\rho_k \log\hat\rho_k}$ is not, in general,  a super-martingale.


\end{document}